\newtheorem{remark}{Remark}
\newtheorem{assumption}{Assumption}
\newtheorem{definition}{Definition}
\newtheorem{theorem}{Theorem}
\newtheorem{lemma}{Lemma}
\newcommand{\blue}[1]{\textcolor{blue}{#1}}
\newcommand{\set}[1]{\mathcal{#1}}
\title{On Sampling Time and Invariance}
\author{Spencer Schutz$^{1}$, Charlott Vallon$^{1}$, Ben Recht$^{2}$, and Francesco Borrelli$^{1}$ 
\thanks{$^{1}$S. Schutz, C. Vallon, and F. Borrelli are with the Department of Mechanical Engineering, University of California, Berkeley,  Berkeley, CA 94720. $\{$spencer.schutz, charlottvallon, fborrelli$\}$@berkeley.edu}%
\thanks{$^{2}$B. Recht is with the Department of Electrical Engineering and Computer Science, University of California, Berkeley, Berkeley, CA 94720. brecht@berkeley.edu}
}
\begin{document}
\maketitle

\begin{abstract} 
Invariant sets define regions of the state space where system constraints are always satisfied. The majority of numerical techniques for computing invariant sets
have been developed for discrete-time systems with a fixed sampling time. Understanding how invariant sets change with sampling time is critical for designing adaptive-sampling control schemes that ensure constraint satisfaction. 
We introduce 
$M$-step hold control invariance, a generalization of traditional control invariance, and show its practical use to assess the link between control sampling frequency and constraint satisfaction. 
We robustify $M$-step hold control invariance against model mismatches and discretization errors, paving the way for adaptive-sampling control strategies.
\end{abstract}

\section{Introduction} \label{sec:intro}
Control design for high-performance, safety-critical autonomous systems requires safety guarantees. Recursive feasibility, the guarantee that constraints are satisfiable at all times, is traditionally achieved through the notion of control invariance. Control invariant sets contain the states for which a control input exists that ensures constraint satisfaction for all future times \cite{Blanchini_Set_Inv}. In optimal control formulations such as Model Predictive Control (MPC), constraining the system to a control invariant set guarantees recursive feasibility \cite{Blanchini_Set_Inv, Borrelli_MPC_book, Decardi_Rob_MPC}. While safety is desired for the real, continuous time system, digital control systems operate in discrete time using forecasts of sampled-data models obtained via discretization of continuous time dynamics. This requires choosing a sampling time, typically backed by intuition of the dynamics, task complexity, and computational resources. 

So-called ``adaptive-sampling control" varies the sampling time during system operation, leveraging the benefits of different sampling times as required by the task. 
Small sampling times enable more frequent actuation changes, providing stability during difficult or highly dynamic tasks. Large sampling times mean fewer actuation changes, which can save energy and computational resources during simple tasks and steady-state operation. 
Adaptive-sampling has been implemented in various controllers, including PID \cite{Dorf_PID_Vary_Ts}, LQR \cite{Henriksson_LQR_Vary_Ts}, and MPC \cite{Xue_MPC_Vary_Ts, Gomozov_MPC_Vary_Ts}. Sampling time is typically adapted based on characteristics of the state 
or tracking reference.

How are safety guarantees affected by changing the sampling time? Previous implementations of adaptive-sampling MPC \cite{Xue_MPC_Vary_Ts, Gomozov_MPC_Vary_Ts} did not provide recursive feasibility guarantees. While Lyapunov-based stability guarantees have been developed for adaptive-sampling systems \cite{Hu_Lyapunov_Vary_Ts}, we found no studies of invariance-based safety guarantees. Here we examine the effects of sampling time on control invariant sets. One might expect that a finer discretization (i.e., a smaller sampling time) leads to an expansion of the system's control invariant set, as the controller can react more frequently. However, we will show that this is not always the case. This observation suggests that traditional control invariance is insufficient to reason about constrained adaptive-sampling control.

Existing invariance generalizations \cite{Shen_recurrence, Olaru_p-invariance} allow constraint relaxation for particular time steps. 
Here we introduce $M$-step hold control invariance, which enforces constraints at all time steps and supports adaptive sampling schemes. Unlike methods finding trajectories that also satisfy inter-sample constraints \cite{Elango_InterSample1, Uzun_Intersample4}, we identify the entire set of states for which inter-sample constraints can be satisfied, by robustifying $M$-step hold invariance against modeling and discretization errors.
We provide algorithms and computational methods for determining these sets.


\section{Problem Formulation} \label{sec:problem}
Consider a continuous time, time invariant system model with state $x\in\mathbb{R}^n$ and control input $u\in\mathbb{R}^m$ subject to state and input constraints:
\begin{subequations}
    \begin{gather} 
        \dot{x}(t)=f_c\big(x(t),u(t)\big) \label{eqn:ct_dyn} \\
        x(t)\in\mathcal{X},~ u(t)\in\mathcal{U},~\forall t\geq0  \label{eqn:const_ct}
    \end{gather}
\end{subequations}

\begin{definition}
\label{def:ctr_inv_ct}The set $\mathcal{C}_{f_c}\subseteq\mathcal{X}$ is \textbf{\textit{control invariant}} for a model \eqref{eqn:ct_dyn} subject to constraints \eqref{eqn:const_ct} if for all states in $\mathcal{C}_{f_c}$ there exists a controller such that the constraints are satisfied for all future time: 
    \begin{gather*}
        x_0\in\mathcal{C}_{f_c} \Rightarrow \exists~ u(t)\in\mathcal{U}~ \text{s.t.}~ x(t)\in\mathcal{C}_{f_c} \  \forall t\geq0. \label{eqn:ctr_inv_ct}
    \end{gather*}
\end{definition}
$\mathcal{C}_{\infty, f_c}$ denotes the ``maximal control invariant set" for $f_c$, i.e., the control invariant set containing all other $\mathcal{C}_{f_c}\subseteq\mathcal{X}$.

Digital control systems typically make use of $f^*_{d, T_s}$, a discretization of $f_c$ with method $^*$ (e.g. Forward Euler, Runge-Kutta, \dots) and sampling time $T_s$:
\begin{subequations}
    \begin{gather} 
        x[k+1]=f^*_{d,T_s}\big(x[k],u[k]\big) \label{eqn:dt_dyn} \\
        x[k]\in\mathcal{X},~ u[k]\in\mathcal{U},~\forall k\in\mathbb{N}_0  \label{eqn:const_dt}
    \end{gather}
\end{subequations}

\begin{definition} $\mathcal{C}_{f^*_{d, T_s}}$ is \textbf{\textit{control invariant}} for \eqref{eqn:dt_dyn} subject to \eqref{eqn:const_dt} if for all initial states in $\mathcal{C}_{f^*_{d,T_s}}$ there exists a controller such that the constraints are satisfied at all time samples: \label{def:ctr_inv_dt}
    \begin{align*}
        &\forall k\in\mathbb{N}_0,~ x[k]\in\mathcal{C}_{f^*_{d, T_s}} \Rightarrow \label{eqn:ctr_inv_dt}\\
        & \exists
        u[k]\in\mathcal{U} ~:~ x[k+1]=f^*_{d,T_s}\big(x[k],u[k]\big) \in\mathcal{C}_{f^*_{d, T_s}}
        \nonumber
    \end{align*}
\end{definition}
\noindent We denote with $\mathcal{C}_{\infty, f^*_{d,T_s}}$ the \textbf{\textit{maximal control invariant set}} for $f^*_{d,T_s}$, i.e., the control invariant set containing all $\mathcal{C}_{f^*_{d,T_s}}\subseteq\mathcal{X}$. 
When the discretization technique `$*$'  is  clear from the context, we will use the simplified notation: \begin{equation}
\mathcal{C}_{\infty,f^*_{d,T_s}} \rightarrow \mathcal{C}_{\infty, T_s}, \  f^*_{d,T_s,} \rightarrow f_{T_s}. \label{eqn:sim_not}
\end{equation}

We are interested in designing 
adaptive-sampling control schemes, where the feedback controller sampling time is adjusted during system operation with guaranteed constraint satisfaction. 
For easier tasks, a lower sampling frequency might suffice, whereas a higher sampling frequency 
may be necessary for more challenging tasks. Here, task difficulty could be measured in terms of how far the system is from a certain constraint boundary.

With this in mind, we consider how to extend the guarantees of control invariance to systems with different discretization sampling times.
A natural idea is to use multiple discretizations of $f_c$ to create different control invariant sets $\mathcal{C}_{\infty,T_s}$, where the sampling time for each discrete model is an integer multiple of the smallest sampling time $T_s$:
\begin{equation}
    T_{s,M}=MT_s,~M\in\mathbb{N}_+ \label{eqn:ts_relation}
\end{equation}
For example, discretizing $f_c$ with $T_s=0.1$ and $M\in\{1,3,8\}$ generates three models $f_{0.1},~f_{0.3},\text{ and } f_{0.8}$.

We begin with the observation that maximal control invariant sets for discretized models do not change predictably when the sampling time is changed. 
Consider the system
\begin{subequations}
    \begin{gather*}
        \dot{x}(t)=f_c(\cdot,\cdot)=\begin{bmatrix} 0 & 1 \\ 0 & 0 \end{bmatrix}x(t)+\begin{bmatrix} 0 \\ 1 
        \end{bmatrix}u(t) \label{eqn:dbl_int_ct}\\
        \begin{bmatrix} -10 \\ -10 \end{bmatrix}\leq x(t)\leq\begin{bmatrix} 10 \\ 10 \end{bmatrix},~-10\leq u(t)\leq 10 \label{eqn:dbl_int_constraint}
    \end{gather*}
\end{subequations}
and two exact discretizations with $T_{s,1}=0.5$ and $M=3$ ($T_{s,3}=1.5$):
    \begin{align}
        x[k+1]&=f_{0.5}(\cdot,\cdot)=\begin{bmatrix} 1 & 0.5 \\ 0 & 1 \end{bmatrix}x[k]+\begin{bmatrix} 0.125 \\ 0.5 \end{bmatrix}u[k] \label{eqn:dbl_int_0.5}\\
        x[k+1]&=f_{1.5}(\cdot,\cdot)=\begin{bmatrix} 1 & 1.5 \\ 0 & 1 \end{bmatrix}x[k]+\begin{bmatrix} 1.125 \\ 1.5 \end{bmatrix}u[k] \nonumber
    \end{align}
The maximal control invariant set for each discretization, computed via Alg. 10.2 in \cite{Borrelli_MPC_book}, is plotted in Fig.~\ref{fig:dbl_int_no_downsample}.

We observe that $\mathcal{C}_{\infty,0.5} \subset \mathcal{C}_{\infty,1.5}$, suggesting there are states which can be safely controlled with a sampling time of $1.5$, but not a sampling time of $0.5$. 
This is an artifact of 
the fact that $\mathcal{C}_{\infty,1.5}$ does not consider constraint violation at the faster sampling time $T_s=0.5$. To show this, 
an optimal control problem was used to find a sequence of feasible inputs to drive $f_{1.5}$ from $x[0]$ to the origin (note that $x[0]\in \mathcal{C}_{\infty, 1.5},\text{ but }x[0]\not\in \mathcal{C}_{\infty, 0.5}$). 
Indeed, up-sampling and applying these inputs to $f_{0.5}$ at a higher frequency results in state constraint violation, as does applying these inputs to the continuous time model $f_c$, as shown in Fig.~\ref{fig:dbl_int_no_downsample}.

Since traditional control invariance for $f_{T_s}$ and $f_{MT_s}$ does not guarantee inter-sample constraint satisfaction, it cannot be readily used to reason about the safety of an adaptive-sampling control scheme. This is formalized in Remark~\ref{rmk:ctr_inv_bad}.

\begin{remark} \label{rmk:ctr_inv_bad}
    $\mathcal{C}_\infty$ for a coarse discretization of $f_c$ is not necessarily a subset of $\mathcal{C}_\infty$ for a finer discretization of $f_c$:
    \begin{equation*} 
       \big(M_j \geq M_i\big) \nRightarrow \big(\mathcal{C}_{\infty,M_jT_s}  \subseteq \mathcal{C}_{\infty,M_iT_s}\big) \label{eqn:bad_inc_a}
    \end{equation*}
\end{remark}

\begin{figure}
    \centering
    \includegraphics[width=1\linewidth]{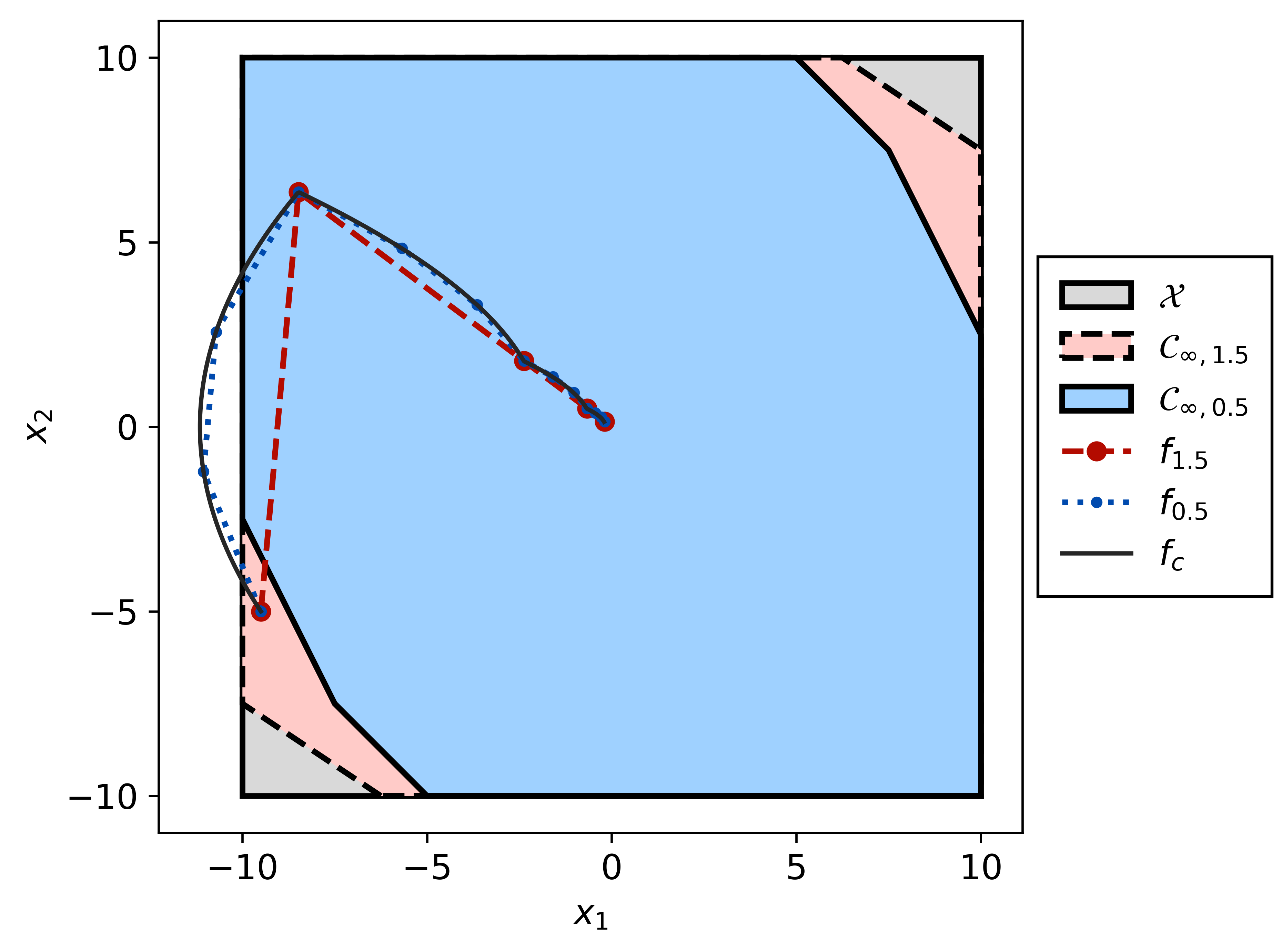}
    \caption{$\mathcal{C}_{\infty,T_s} \subset\mathcal{C}_{\infty,MT_s}$ for exact discretizations of a constrained double integrator with $T_s=0.5$ and $M=3$. Up-sampled optimal inputs calculated for $f_{MT_s}$ applied to $f_{T_s}$ and $f_c$ can violate constraints.}
    \label{fig:dbl_int_no_downsample}
\end{figure}
Motivated by Remark~\ref{rmk:ctr_inv_bad}, we introduce ``$M$-step hold control invariance," a generalization of control invariance that guarantees inter-sample constraint satisfaction, allowing us to properly study the effect of sampling time on constraint satisfaction. Nominal $M$-step hold control invariance guarantees constraint satisfaction at multiples of the fastest sampling time, while the robust version does so at all time instants.



\section{\emph{M}-Step Hold Control Invariance}\label{sec:mstep_ctr_inv}
A controller employs a Zero-Order Hold (ZOH) if it maintains a constant control input over each sampling interval 
\begin{align*}
    u(t) = u[k],~ \forall t \in \big[k\cdot T_s, (k+1)\cdot T_s\big).
\end{align*}
An ``$M$-step hold" extends this idea for multiple intervals. 
\begin{definition} A digital controller with sampling time $T_s$ is implemented with an \textit{\textbf{$M$-step hold}} if the input is allowed to change only every $M$ time samples: \label{def:mstep_hold}
    \begin{equation*}
        u[k]=u[k-1] \text{ if } \big(k>0 \text{ and mod} (k,M)\neq0\big), \  k\in\mathbb{N}_0 
    \end{equation*}
\end{definition}
We will refer to a controller satisfying Def.~\ref{def:mstep_hold} as an $M$-step hold controller. Note ZOH and $M$-step hold refer to control update rate properties, not discretization methods for generating $f^*_{d,T_s}$. A model-based M-step hold controller can calculate inputs using any $f^*_{d,T_s}$.
\begin{definition}
The set $\mathcal{C}^M_{f^*_{d,T_s}}\subseteq{\mathcal{X}}$ is \textit{\textbf{$M$-step hold control invariant}} for \eqref{eqn:dt_dyn} subject to \eqref{eqn:const_dt} if for all initial states in $\mathcal{C}^M_{f^*_{d,T_s}}$ there exists an $M$-step hold controller such that the system constraints are satisfied at all time samples:
    \begin{align*}
        & x[k]\in\mathcal{C}^M_{f^*_{d, T_s}} \Rightarrow \\
        & \exists \ u[k]\in\mathcal{U} ~:~ f^*_{d,T_s}\big(x[k],u[k]\big) \in\mathcal{C}^M_{f^*_{d, T_s}} \ \forall k\in\mathbb{N}_0\nonumber \\
        & \text{and } u[k]=u[k-1] \text{ if } \big(k>0 \text{ and mod} (k,M) \neq 0\big) \nonumber
    \end{align*}
\end{definition}

We denote with $\mathcal{C}^M_{\infty, f^*_{d,T_s}}$ the \textit{\textbf{maximal $M$-step hold control invariant set}}, i.e., the $M$-step hold control invariant set containing all other $\mathcal{C}^M_{f^*_{d,T_s}}\subseteq\mathcal{X}$. As in~\eqref{eqn:sim_not}, for a fixed discretization technique, we will use the 
simplified notation
\begin{gather}
    \mathcal{C}^M_{\infty, f^*_{d,T_s}}\rightarrow \mathcal{C}^M_{\infty,T_s}
    \label{eqn:msim_not}
\end{gather}
The set $\mathcal{C}^M_{\infty,f^*_{d,T_s}}$ is calculated using Alg.~\ref{alg:mcinf}, a modification of the standard fixed-point algorithm (see Alg. 10.2 in \cite{Borrelli_MPC_book}). 

\begin{definition} The \textit{\textbf{$M$-step hold precursor set}} $Pre^M(\mathcal{S})$ of target set $\mathcal{S}$ for \eqref{eqn:dt_dyn} is the set of all states $x[0]\in\mathbb{R}^n$ for which there exists a constant input $u\in\mathcal{U}$ such that all of the next $M$ states are in $\mathcal{S}$: \label{def:mpre}
    \begin{align}
        & Pre^M(\mathcal{S}) = \big\{x[0]\in\mathbb{R}^n:\exists \ u\in\mathcal{U} \text{ s.t. }  \label{eqn:mpre} \\
        & ~~~~~x[k+1]=f^*_{T_s}\big(x[k],u\big)\in\mathcal{S}, \ \forall k\in\{0,\dots,M-1\}\big\} \nonumber
    \end{align}
\end{definition}


The formula for computing $Pre^M(\set{S})$ for discrete LTI models is in Sec.~\ref{sec:computation}. By Def.~\ref{def:mpre}, $Pre^1(\set{S})$ is equivalent to the traditional precursor set in \cite{Borrelli_MPC_book}. Thus,
$\mathcal{C}_{f^*_{d,T_s}} = \mathcal{C}^1_{f^*_{d,T_s}}$ and $\mathcal{C}_{\infty,f^*_{d,T_s}}= \set{C}^1_{\infty,f^*_{d,T_s}}$. Any traditional control invariant set can be generated using $M$-step hold control invariance by selecting $M=1$. However, any $M$-step hold control invariant set cannot be generated using traditional control invariance. Thus, $M$-step control invariance is a generalization of traditional control invariance.

\begin{algorithm}[t]
    \caption{Computation of $\mathcal{C}^M_\infty$} \label{alg:mcinf}
    \begin{algorithmic}
        \Require $f^*_{d,T_s}, \mathcal{X}, \mathcal{U}, M$
        \Ensure $\mathcal{C}^M_\infty$
        \State $\Omega^M_0 \gets \mathcal{X}, i \gets -1$
        \Repeat 
            \State $i \gets i+1$
            \State $\Omega^M_{i+1}\gets Pre^M(\Omega^M_i)\cap\Omega^M_i$
        \Until $\Omega^M_{i+1}=\Omega^M_i$
        \State $\mathcal{C}^M_\infty \gets \Omega^M_{i+1}$
    \end{algorithmic}
\end{algorithm}


$M$-step hold control invariant sets guarantee inter-sample constraint satisfaction, allowing us to properly study the effect of control sampling time. This is formalized in Thm.~\ref{thm:mset_inclusion}. 
\begin{lemma} \label{lem:mpre_inclusion}
For fixed $\mathcal{S}$, $Pre^{M+1}(\mathcal{S})\subseteq Pre^M(\mathcal{S}).$
\end{lemma}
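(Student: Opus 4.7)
The plan is to prove this by a direct witness argument: the conditions defining $Pre^{M+1}(\mathcal{S})$ are strictly stronger than those defining $Pre^M(\mathcal{S})$, so any constant input that witnesses membership in the former also witnesses membership in the latter.

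First I would fix an arbitrary $x[0] \in Pre^{M+1}(\mathcal{S})$. By Def.~\ref{def:mpre} applied with $M+1$, there exists a constant $u \in \mathcal{U}$ such that the trajectory $x[k+1] = f^*_{T_s}(x[k], u)$ satisfies $x[k+1] \in \mathcal{S}$ for all $k \in \{0, \ldots, M\}$.

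Next, I would simply observe that $\{0, \ldots, M-1\} \subseteq \{0, \ldots, M\}$, so the same constant $u$ also satisfies $x[k+1] \in \mathcal{S}$ for all $k \in \{0, \ldots, M-1\}$. Hence $x[0] \in Pre^M(\mathcal{S})$ by Def.~\ref{def:mpre} applied with $M$. Since $x[0]$ was arbitrary, $Pre^{M+1}(\mathcal{S}) \subseteq Pre^M(\mathcal{S})$.

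There is no real obstacle here; the argument is essentially a tautology once the definition is unpacked. The only subtle point worth noting is that the witness input $u$ must be the \emph{same} constant for both precursor sets, which is automatic because Def.~\ref{def:mpre} requires a single constant input over the entire horizon, so restricting to a shorter horizon does not require any modification of $u$.
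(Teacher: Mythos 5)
Your proof is correct and rests on the same observation as the paper's: the conditions defining $Pre^{M+1}(\mathcal{S})$ contain those defining $Pre^{M}(\mathcal{S})$ for the same held input $u$. The paper packages this as a set decomposition, writing $Pre^{M+1}(\mathcal{S})$ as $Pre^{M}(\mathcal{S})$ intersected with the additional requirement $x[M+1]\in\mathcal{S}$ and then discarding the second factor, whereas your element-wise witness argument reaches the same conclusion directly and makes the shared constant input explicit.
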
 
\begin{proof}
    By Def.~\ref{def:mpre},
    \begin{align}
    Pre^{M+1}(\mathcal{S})&  = \big\{x[0]\in\mathbb{R}^n:\exists \ u\in\mathcal{U}  \nonumber \\
    & ~~~~~\text{s.t. } x[k+1]=f^*_{T_s}\big(x[k],u\big)\in\mathcal{S}, \nonumber \\ 
    & ~~~~~ k\in\{0,\dots,M\}\big\} \nonumber \\
    & = 
 \big\{x[0]\in\mathbb{R}^n:\exists \ u\in\mathcal{U} \nonumber \\
 & ~~~~~\text{s.t. } x[k+1]=f^*_{T_s} \big(x[k],u\big)\in\mathcal{S}, \nonumber \\
     & ~~~~~ k\in\{0,\dots,M-1\}\big\} \nonumber \\
     &~~~\cap \big\{x[0]\in\mathbb{R}^n: ~x[k+1]=f^*_{d,T_s}\big(x[k],u\big), \nonumber \\
    & ~~~~~k\in\{0,\dots,M\}, \  x[M+1]\in{S} \big\}\nonumber\\
     &=Pre^M(\mathcal{S}) \label{eqn:mcall} \\
     & ~~~\cap \big\{x[0]\in\mathbb{R}^n:~ x[k+1]=f^*_{d,T_s}\big(x[k],u\big), \nonumber \\
    & ~~~~~k\in\{0,\dots,M\}, \  x[M+1]\in{S}\big\} \nonumber
    \end{align}
    Note that for any sets $\mathcal{D}$, $\mathcal{E}$, $\mathcal{F}$: $(\mathcal{D}=\mathcal{E} \cap \mathcal{F}) \Rightarrow (\mathcal{D} \subseteq\mathcal{E})$
    which completes the proof:
    \begin{equation*}
        Pre^{M+1}(\mathcal{S})\subseteq Pre^M(\mathcal{S})
    \end{equation*}
\end{proof}

\begin{theorem}\label{thm:mset_inclusion}
    Given a discrete model $f_{T_s}$, if Alg.~\ref{alg:mcinf} converges in finite time for a given $M$ and $M+1$, then    the maximal $(M+1)$-step hold control invariant set is a subset of the maximal $M$-step hold control invariant set:
    \begin{equation*}
        \mathcal{C}^{M+1}_{\infty, T_s}\subseteq \mathcal{C}^{M}_{\infty, T_s} 
    \end{equation*}
\end{theorem}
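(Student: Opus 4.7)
The plan is to induct on the iteration index of Algorithm~\ref{alg:mcinf}, proving that $\Omega^{M+1}_i \subseteq \Omega^{M}_i$ at every step $i$, and then pass to the common fixed point. This approach fits naturally with the structure of the algorithm and with Lem.~\ref{lem:mpre_inclusion}, which is the obvious tool to bring in.

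For the base case, both iterations start from $\Omega^{M}_0 = \Omega^{M+1}_0 = \mathcal{X}$, so the inclusion holds trivially. For the inductive step, assume $\Omega^{M+1}_i \subseteq \Omega^{M}_i$. To show $\Omega^{M+1}_{i+1} \subseteq \Omega^{M}_{i+1}$, I will first establish a short auxiliary monotonicity fact: if $\mathcal{S} \subseteq \mathcal{S}'$ then $Pre^M(\mathcal{S}) \subseteq Pre^M(\mathcal{S}')$. This is essentially immediate from Def.~\ref{def:mpre}, since any constant input $u$ that keeps the $M$-step trajectory inside $\mathcal{S}$ also keeps it inside the larger set $\mathcal{S}'$. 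With this in hand, the chain
\begin{equation*}
Pre^{M+1}(\Omega^{M+1}_i) \subseteq Pre^{M}(\Omega^{M+1}_i) \subseteq Pre^{M}(\Omega^{M}_i)
\end{equation*}
follows: the first inclusion is Lem.~\ref{lem:mpre_inclusion} applied to $\mathcal{S} = \Omega^{M+1}_i$, and the second is monotonicity combined with the inductive hypothesis. Intersecting with $\Omega^{M+1}_i \subseteq \Omega^{M}_i$ yields $\Omega^{M+1}_{i+1} \subseteq \Omega^{M}_{i+1}$, closing the induction.

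To conclude, I will use the finite convergence assumption. Since the iterates of Alg.~\ref{alg:mcinf} are monotone non-increasing in $i$ (because $\Omega^M_{i+1} = Pre^M(\Omega^M_i) \cap \Omega^M_i \subseteq \Omega^M_i$), once a fixed point is reached it is retained. Letting $i_M$ and $i_{M+1}$ be convergence indices for the two runs and $i^\star = \max(i_M, i_{M+1})$, we have $\mathcal{C}^M_{\infty,T_s} = \Omega^M_{i^\star}$ and $\mathcal{C}^{M+1}_{\infty,T_s} = \Omega^{M+1}_{i^\star}$, and the inductive inclusion at level $i^\star$ gives $\mathcal{C}^{M+1}_{\infty,T_s} \subseteq \mathcal{C}^{M}_{\infty,T_s}$ as claimed.

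I do not anticipate a serious obstacle. The only subtle point is making sure the monotonicity of $Pre^M$ in its set argument is justified, since it is not explicitly isolated in the excerpt; it is however a one-line consequence of Def.~\ref{def:mpre}. The finite-time convergence hypothesis is only used so that a single index $i^\star$ witnesses both fixed points simultaneously; without it one would need to argue with infima over the descending chains, which is still fine but a bit less clean.
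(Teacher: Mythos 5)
Your proposal is correct and follows essentially the same route as the paper: induction on the iteration index of Alg.~\ref{alg:mcinf}, with the base case $\Omega^{M+1}_0=\Omega^M_0=\mathcal{X}$, Lem.~\ref{lem:mpre_inclusion} for $Pre^{M+1}\subseteq Pre^{M}$, and monotonicity of $Pre^M(\cdot)$ in its set argument (which the paper phrases via the decomposition $\Omega^M_i=\Omega^{M+1}_i\cup(\Omega^M_i\setminus\Omega^{M+1}_i)$). Your explicit handling of the common convergence index $i^\star$ is a slightly more careful finish of the same argument, not a different approach.
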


\proof
    We show this using induction.
    Initialize two instances of Alg.~\ref{alg:mcinf} for $f_{T_s}$, computing $\mathcal{C}^{M+1}_{\infty,T_s}$ and $\mathcal{C}^{M}_{\infty,T_s}$ respectively. 
    
    Consider $i=0$, where $\Omega^{M+1}_0 = \Omega^M_0 = \mathcal{X}$.  
   By Lem.~\ref{lem:mpre_inclusion}:
    \begin{equation*}
        Pre^{M+1}(\Omega^{M+1}_0)\subseteq Pre^M(\Omega^M_0),
    \end{equation*}
    from which it follows that 
    \begin{subequations}
    \begin{align*}
        \big(Pre^{M+1}(\Omega^{M+1}_0)\cap \Omega^{M+1}_0\big) &\subseteq \big(Pre^M(\Omega^M_0)\cap\Omega^M_0\big)  
    \end{align*}
    \end{subequations}
and therefore $\Omega^{M+1}_1 \subseteq\Omega ^M_1.$

Now consider iteration $i$ of Alg.~\ref{alg:mcinf}, and assume $\Omega^{M+1}_i \subseteq \Omega^M_i$. Invoking Lem.~\ref{lem:mpre_inclusion} on the set $\Omega^{M+1}_i$ gives:
\begin{subequations}
    \begin{align}
    Pre^{M+1}(\Omega^{M+1}_i)& \subseteq Pre^M(\Omega^{M+1}_i) \nonumber \\
    & \subseteq Pre^M\big(\Omega^{M+1}_i\cup(\Omega^M_i\setminus\Omega^{M+1}_i) \big) \label{eq:thm1ref1}\\
    & = Pre^M(\Omega^{M}_i), \label{eq:thm1ref2}
\end{align}
\end{subequations}
where \eqref{eq:thm1ref1} follows since including additional states in the argument of $Pre^M(\cdot)$ cannot make the output smaller,
and \eqref{eq:thm1ref2} follows from $\Omega^M_i=\Omega^{M+1}_i\cup (\Omega^M_i\setminus\Omega^{M+1}_i). $

Thus we have shown for Alg.~\ref{alg:mcinf} that \textit{1)} at iteration $i=0$, $\Omega_0^{M+1} \subseteq \Omega_0^M$, and \textit{2)} if $\Omega_i^{M+1} \subseteq \Omega_i^M$, then $\Omega_{i+1}^{M+1} \subseteq \Omega_{i+1}^M$. We conclude by induction that $\Omega_{\infty}^{M+1} \subseteq \Omega_{\infty}^M$, and therefore it follows by design of Alg.~\ref{alg:mcinf} that
\begin{align*}
\mathcal{C}^{M+1}_{\infty,T_s}&\subseteq\mathcal{C}^{M}_{\infty, T_s}
\end{align*}

\endproof

Consider the earlier example \eqref{eqn:dbl_int_0.5} of a discretized constrained double integrator with $T_s=0.5$. Figure \ref{fig:mset_inclusion} demonstrates the intuitive evolution of $\set{C}^M_{\infty,T_s}$ with $M$: sets with larger $M$ are subsets of those with smaller $M$. 
\begin{equation*}
    (M_j\geq M_i) \Rightarrow (\mathcal{C}^{M_j}_{\infty,T_s}\subseteq \mathcal{C}^{M_i}_{\infty,T_s}) \label{eqn:good_inc}
\end{equation*}
\begin{figure}
    \centering
    \includegraphics[width=1\linewidth]{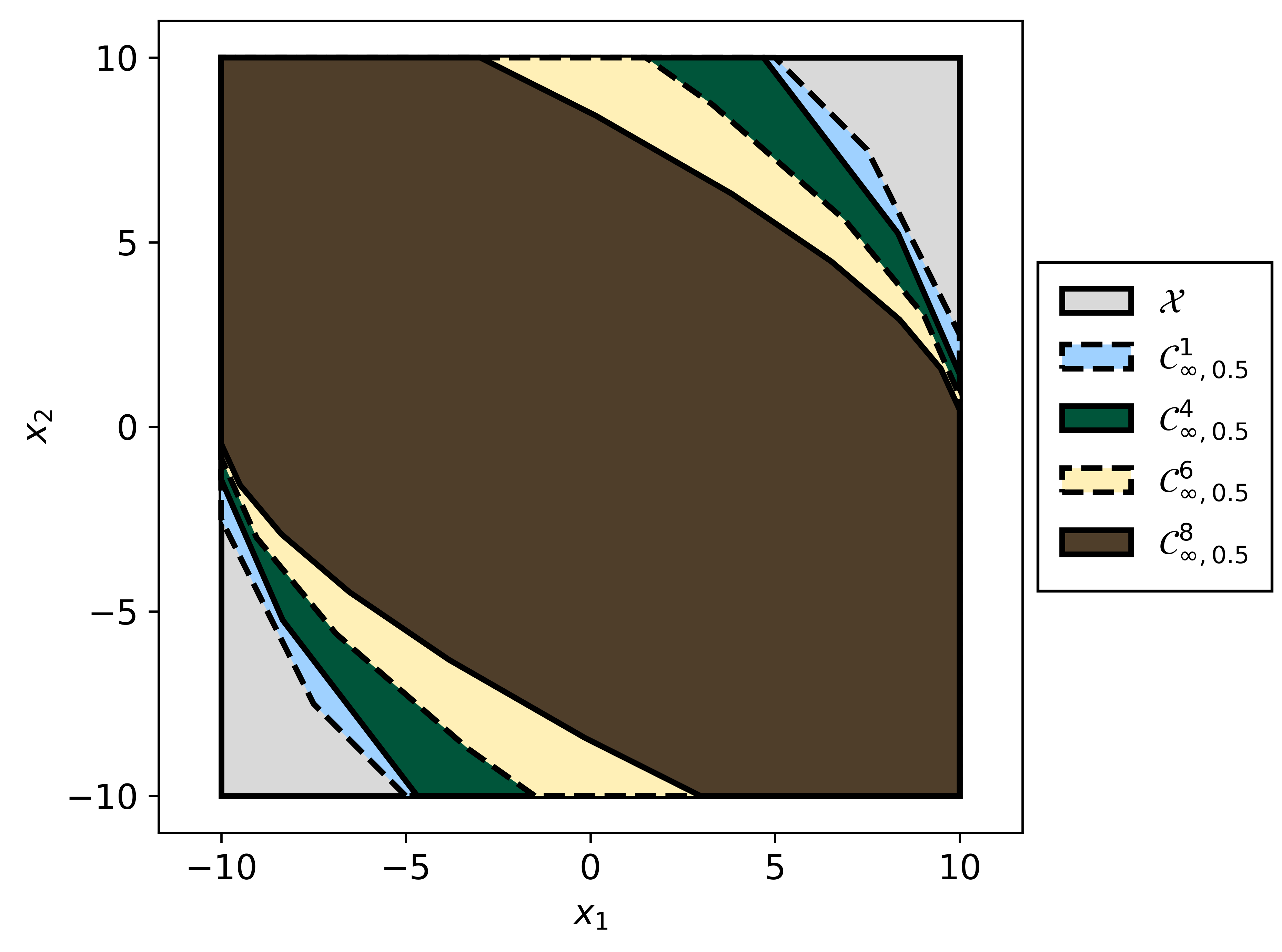}
    \caption{
    $\mathcal{C}^M_{\infty,T_s}$ with larger $M$ are subsets of those with smaller $M$.}
    \label{fig:mset_inclusion}
\end{figure}

\begin{figure}
    \centering
    \includegraphics[width=1\linewidth]{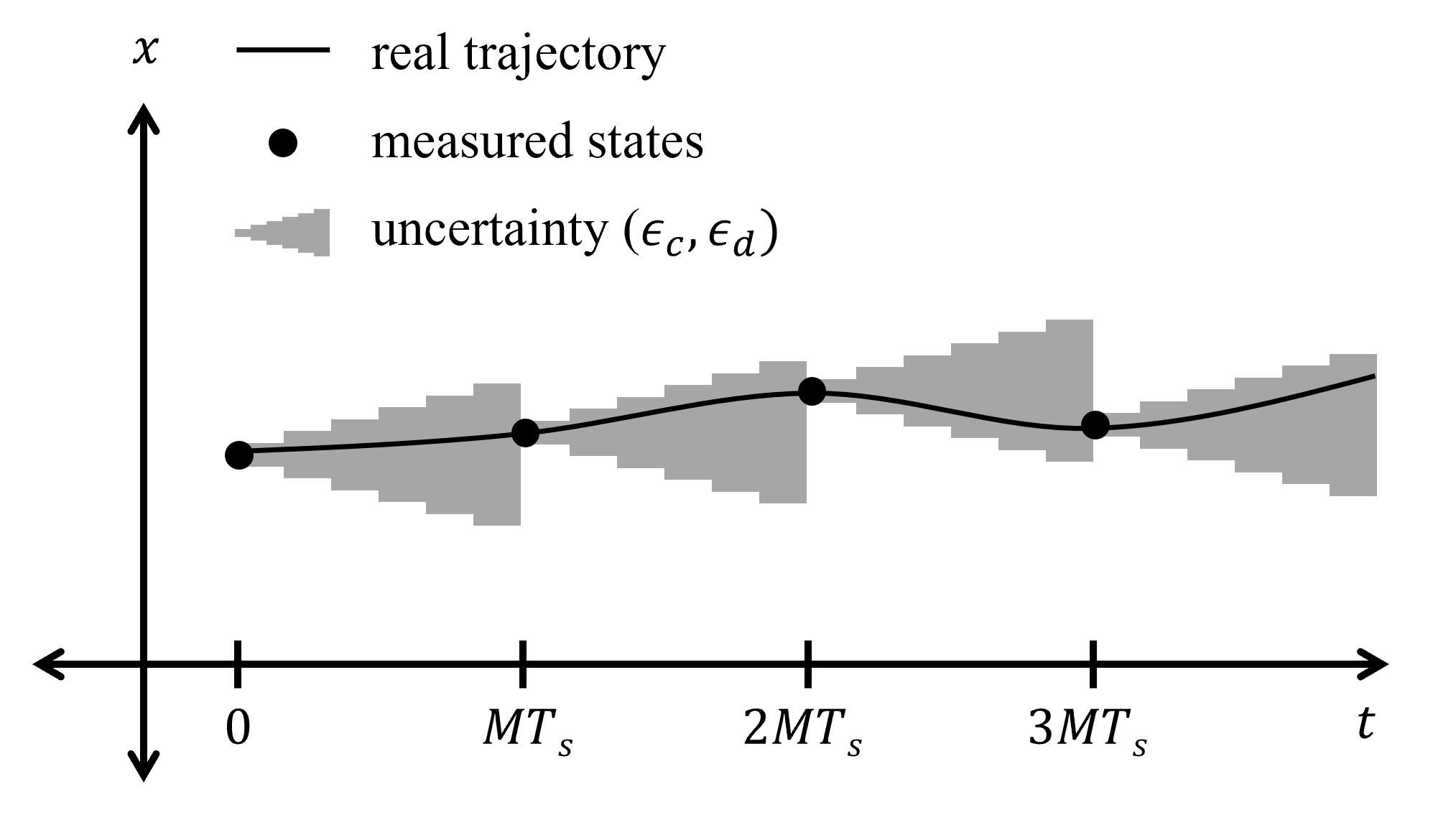}
    \caption{Visualization of $\set{W}[\text{mod}(k,M)]$ for a 1D system. The uncertainty ($\epsilon_c$ and $\epsilon_d$) resets with each measurement.}
    \label{fig:w_reset}
\end{figure}

The behavior seen in Fig.~\ref{fig:mset_inclusion}, guaranteed by Thm.~\ref{thm:mset_inclusion}, makes $M$-step hold control invariance a suitable tool for designing adaptive-sampling controllers. 
We emphasize two key points. 
First, the polytopes show the state-space regions for which the system can safely run in open loop under an appropriate control law during the $M$-step hold. 
Second, Fig.~\ref{fig:mset_inclusion} shows how to safely switch between values of $M$, i.e. switch between sampling times. For any $M_2>M_1$, a controller using a fixed model $f^*_{d_,T_s}$ can switch from an $M_2$-step hold controller to an $M_1$-step hold controller whenever  $x[k]\in\set{C}^{M_2}_{\infty,T_s}$. Continued constraint satisfaction is guaranteed because all $x[k]$ in $\set{C}^{M_2}_{\infty,T_s}$ are also in $\set{C}^{M_1}_{\infty,T_s}$. 
To safely switch from an $M_1$-step hold controller to an $M_2$-step hold controller, the state must first be driven back into $\mathcal{C}^{M_2}_{\infty,T_s}$. 
This idea will be leveraged in future work to design controllers which modify sampling time based on task complexity.



\section{Robustification}\label{sec:robust}

Section~\ref{sec:mstep_ctr_inv} showed that $M$-step hold control invariant sets 
for a discrete-time model $f^{\star}_{d, T_s}$ 
can be used to guarantee constraint satisfaction at every sampling time $T_s$ despite using an $M$-step hold controller. 
But while $f^{\star}_{d, T_s}$ is used to calculate the optimal control inputs, the inputs are applied to a continuous real-world system, resulting in different closed-loop behavior than captured by $f^{\star}_{d, T_s}$ and contained in the corresponding $M$-step hold control invariant set.

Consider the unknown ``real" continuous time system 
    \begin{align}
    &\dot{x}_r(t)=f_{r}(x_r(t),u_r(t)), \label{eqn:real_dyn}\\
    &{x}_r(t) \in \mathcal{X},~ u_r(t) \in \mathcal{U}
\end{align}
whose behavior is approximated by the model $f_c$ \eqref{eqn:ct_dyn}. 
Additional approximation errors are introduced when the model $f_c$ is discretized. 
At time step $k$, the future state $x[k+T|k]$ at time $k+T$ predicted using a discretized model of $f_c$ will have an error\footnote{In \eqref{eq:error_bounding_one}, we consider that $x[k|k]=x_r(kT_s)$ and that the sampled input applied to the discrete time prediction model is the same as applied to the the true continuous time model, held constant between samples.} compared to the true future state of the real-world system $x_r\big((k+T)T_s\big)$, i.e.
\begin{align}\label{eq:error_bounding_one}
    \big\|x[k+T|k] - x_r\big((k+T)\cdot T_s\big)\big\|_2 = \big\|\epsilon[T|k]\big\|_2,
\end{align}
where $\epsilon[T|t]$ is the sum of the modeling error $\epsilon_c$, introduced by inaccuracies in $f_c$ such as simplified dynamics and parameter uncertainty, and the discretization error $\epsilon_d$, for example as described in \cite{STEIN20112626}:
\begin{equation}\label{eqn:errors}
    \epsilon[T|k]=\epsilon_c[T|k]+\epsilon_d[T|k]. 
\end{equation}

Extensive literature has focused on bounding these modeling errors \cite{tomlin2017, hur2019, VOELKER2013943} and discretization errors \cite{roy2010}; a review on this topic is beyond this paper's scope.
Here we assume the modeling and discretization error $\epsilon[T|k]$ grows along the prediction horizon $T$, and resets whenever a new state measurement becomes available, as depicted in Fig.~\ref{fig:w_reset}. 


In order to extend the methods from Sec.~\ref{sec:mstep_ctr_inv} to real-world systems in closed-loop with an $M$-step hold controller, we therefore consider a discretized model $f^*_{d,T_s}$ with an added disturbance $w$ with time-varying bound:
\begin{subequations}
    \begin{align}
        x[k+1]& =f^*_{d,T_s}\big(x[k],u[k],w[k]\big) \label{eqn:rob_dt_dyn}\\
    x[k]\in\set{X}&,~ u[k]\in\set{U},~ w[k]\in\set{W}\big[\text{mod}(k,M)\big] \label{eqn:rob_const_dt}
    \end{align}
\end{subequations}
where $\mathcal{W}[k]$ captures the time-varying combined effects of $\epsilon_c$ and $\epsilon_d$.
$\mathcal{W}\big[\text{mod}(k,M)\big]$
captures the fact that uncertainty bounds reset at each sampling time (assuming perfect state measurement), shown in Fig.~\ref{fig:w_reset}. 
Robustifying $M$-step control invariance against $\mathcal{W}\big[\text{mod}(k,M)\big]$ will guarantee constraint satisfaction of the true system $f_r$ \eqref{eqn:real_dyn} for all time. 

\begin{definition}\label{def:rob_ctr_inv_dt} $\mathcal{RC}^M_{f^*_{d, T_s}}$ is a \textit{\textbf{robust $M$-step hold control invariant set}} for \eqref{eqn:rob_dt_dyn} subject to \eqref{eqn:rob_const_dt} if for all initial states in $\mathcal{RC}_{f^*_{d,T_s}}$ there exists a controller with an $M$-step hold such that constraints are robustly satisfied at all time samples: 
    \begin{align}
    & x[k]\in \mathcal{RC}_{f^*_{d, T_s}} \Rightarrow \exists \ u[k]\in\mathcal{U} \ \text{s.t.} \label{eqn:rob_ctr_inv_dt} \\
    &  ~~~\ x[k+1]=f^*_{d,T_s}\big(x[k],u[k],w[k]\big) \in \mathcal{RC}_{f^*_{d, T_s}} \nonumber \\
& ~~~\ \forall w[k]\in\set{W}\big[\text{mod}(k,M)\big], ~\forall k\in\mathbb{N}_0 \nonumber\\
    & \text{and } u[k]=u[k-1] \text{ if } \big(k>0 \text{ and mod} (k,M)\neq0\big) \nonumber 
    \end{align}
\end{definition}

The \textbf{\textit{maximal robust $M$-step hold control invariant set}} $\mathcal{RC}^M_{\infty,f^*_{d, T_s}}$ is the largest robust $M$-step hold control invariant set containing all other $\mathcal{RC}^M_{f^*_{d, T_s}}\in\set{X}$. 
It is computed using Def.~\ref{def:robmpre} and Alg.~\ref{alg:rob_mcinf}, robust analogues of Def.~\ref{def:mpre} and Alg.~\ref{alg:mcinf}. 

\begin{definition} \label{def:robmpre} The \textit{\textbf{robust $M$-step hold precursor set}} $Pre^M(\mathcal{S},\mathcal{W})$ of target set $\mathcal{S}$ for \eqref{eqn:rob_dt_dyn} is the set of all states $x[0]\in\mathbb{R}^n$ for which there exists a held input $u\in\mathcal{U}$ such that the next $M$ states are robustly in $\mathcal{S}$.
\begin{align}
    & Pre^M(\mathcal{S,W}) = \big\{x[0]\in\mathbb{R}^n:\exists \ u\in\mathcal{U} \text{ s.t. } \nonumber \\
    & ~~~~~~x[k+1]=f^*_{T_s}\big(x[k],u,w[k]\big)\in\mathcal{S}, \nonumber \\ 
    & ~~~~\forall k\in\{0,\dots,M-1\}, \forall w[k]\in\set{W}\big[\text{mod}(k,M)\big]\big\} \label{eqn:rob_mpre}
\end{align}
\end{definition}
\noindent The formula for computing $Pre^M(\set{S},\set{W})$ for discrete LTI models is in Section \ref{sec:computation}.

\begin{algorithm}[t]
    \caption{Computation of  $\mathcal{RC}^M_\infty$} \label{alg:rob_mcinf}
    \begin{algorithmic}
        \Require $f^*_{d,T_s}, \mathcal{X}, \mathcal{U}, \set{W}, M$
        \Ensure $\mathcal{RC}^M_\infty$
        \State $\Omega^M_0 \gets \mathcal{X}, i \gets -1$
        \Repeat 
            \State $i \gets i+1$
            \State $\Omega^M_{i+1}\gets Pre^M(\Omega^M_i,\set{W})\cap\Omega^M_i$
        \Until $\Omega^M_{i+1}=\Omega^M_i$
        \State $\mathcal{RC}^M_\infty \gets \Omega^M_{i+1}$
    \end{algorithmic}
\end{algorithm}

Importantly, robustifying $\set{C}^M_{\infty,T_s}$ against $\mathcal{W}$ does not change its evolution with respect to $M$, as stated in Thm.~\ref{thm:rob_set_inclusion}. 

\begin{lemma}\label{lem:robustmpre_inclusion}
    For fixed $\mathcal{S}$, $Pre^{M+1}(\mathcal{S}, \mathcal{W})\subseteq Pre^M(\mathcal{S}, \mathcal{W}).$
\end{lemma}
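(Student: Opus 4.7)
The plan is to mirror the structure of the proof of Lemma 1 essentially verbatim, with the added care that the disturbance admissible sets $\mathcal{W}[\text{mod}(k,\cdot)]$ are indexed slightly differently in $Pre^{M+1}$ versus $Pre^M$. The key observation is that, within a single hold window, the mod operation is trivial: for $k \in \{0,\dots,M-1\}$ we have $\text{mod}(k,M)=k$, and for $k\in\{0,\dots,M\}$ we have $\text{mod}(k,M+1)=k$. Hence, on the overlapping index range $k\in\{0,\dots,M-1\}$ the disturbance admissibility conditions coincide in both definitions, namely $w[k]\in\mathcal{W}[k]$.

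Given this, I would expand $Pre^{M+1}(\mathcal{S},\mathcal{W})$ directly from Def.~\ref{def:robmpre} as the set of $x[0]$ for which some constant $u\in\mathcal{U}$ drives $x[k+1]\in\mathcal{S}$ for every $k\in\{0,\dots,M\}$ and every $w[k]\in\mathcal{W}[k]$. I would then split this single conjunction over $k$ into two conjunctions: one over $k\in\{0,\dots,M-1\}$ and one involving only $k=M$. The first conjunction, combined with the existential over $u$, is by definition $Pre^M(\mathcal{S},\mathcal{W})$. Thus $Pre^{M+1}(\mathcal{S},\mathcal{W})$ can be written as an intersection
\begin{align*}
Pre^{M+1}(\mathcal{S},\mathcal{W}) = Pre^M(\mathcal{S},\mathcal{W}) \,\cap\, \mathcal{T},
\end{align*}
where $\mathcal{T}$ collects the extra requirement that the same held $u$ also robustly maps $x[M]$ into $\mathcal{S}$ under every $w[M]\in\mathcal{W}[M]$. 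Invoking the elementary fact $\mathcal{D}=\mathcal{E}\cap\mathcal{F}\Rightarrow\mathcal{D}\subseteq\mathcal{E}$ used in Lem.~\ref{lem:mpre_inclusion} then yields the claim.

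One subtlety worth spelling out is that the existential quantifier $\exists u\in\mathcal{U}$ lives outside the per-step conjunction. Strictly, splitting the conjunction preserves inclusion because any $u$ that witnesses membership in $Pre^{M+1}(\mathcal{S},\mathcal{W})$ is also a valid witness for $Pre^M(\mathcal{S},\mathcal{W})$ (it satisfies the first $M$ robust inclusions, which are a strict sub-collection of the $M+1$ robust inclusions). I would note this explicitly so that the intersection rewriting is unambiguous.

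The main obstacle, and essentially the only nontrivial point, is the bookkeeping with the two mod operations: one must verify that the disturbance bound imposed on $w[k]$ is the same in both definitions on $k\in\{0,\dots,M-1\}$, so that the first $M$ robust set inclusions in $Pre^{M+1}(\mathcal{S},\mathcal{W})$ match those of $Pre^{M}(\mathcal{S},\mathcal{W})$ exactly. Once that alignment is established, the remainder of the argument is a direct adaptation of Lem.~\ref{lem:mpre_inclusion} and requires no new ideas.
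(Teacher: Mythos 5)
Your proposal is correct and follows essentially the same route as the paper's proof: expand $Pre^{M+1}(\mathcal{S},\mathcal{W})$ from Def.~\ref{def:robmpre}, peel off the extra $k=M$ requirement so that the remaining conditions give $Pre^{M}(\mathcal{S},\mathcal{W})$, and conclude via $\mathcal{D}=\mathcal{E}\cap\mathcal{F}\Rightarrow\mathcal{D}\subseteq\mathcal{E}$. Your explicit checks that $\text{mod}(k,M)=\text{mod}(k,M+1)=k$ on the shared index range and that a witnessing $u$ for $Pre^{M+1}$ also witnesses $Pre^{M}$ are points the paper leaves implicit, but they do not change the argument.
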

\proof
    By Def.~\ref{def:robmpre},
    \begin{align}
    Pre^{M+1}(\mathcal{S},\mathcal{W})&  = \big\{x[0]\in\mathbb{R}^n:\exists \ u\in\mathcal{U}  \nonumber \\
    & ~~~\text{s.t. } x[k+1]=f^*_{T_s}\big(x[k],u,w[k]\big)\in\mathcal{S}, \nonumber \\ 
    & ~~~ \forall w[k] \in \mathcal{W}\big[\text{mod}(k,M+1)\big], \nonumber\\
    & ~~~ k\in\{0,\dots,M\}\big\} \nonumber \\
    = & ~  
 \big\{x[0]\in\mathbb{R}^n:\exists \ u\in\mathcal{U} \nonumber \\
 & ~~ ~\text{s.t. } x[k+1]=f^*_{T_s} \big(x[k],u,w[k]\big)\in\mathcal{S}, \nonumber \\
     & ~~~ \forall w[k] \in \mathcal{W}\big[\text{mod}(k,M+1)\big], \nonumber\\
     & ~~~ k\in\{0,\dots,M-1\}\big\} \nonumber \\
     &~\cap \nonumber\\ 
    &~\big\{x[0]\in\mathbb{R}^n: \nonumber\\
    & ~~~x[k+1]=f^*_{d,T_s}\big(x[k],u,w[k]\big), \nonumber \\
    & ~~~k\in\{1,\dots,M\}, \  x[M+1]\in{S} \nonumber\\
    & ~~~\forall w[k] \in \mathcal{W}\big[\text{mod}(k,M+1)\big]\big\} \nonumber\\
     = &~Pre^M(\mathcal{S}, \mathcal{W}) \label{eqn:robustmcall} \\
     & ~\cap  \nonumber \\ 
     &~\big\{x[0]\in\mathbb{R}^n: \nonumber\\
    & ~~~x[k+1]=f^*_{d,T_s}\big(x[k],u,w[k]\big), \nonumber \\
    & ~~~k\in\{1,\dots,M\}, \  x[M+1]\in{S} \nonumber\\
    & ~~~\forall w[k] \in \mathcal{W}\big[\text{mod}(k,M+1)\big]\big\}. \nonumber
    \end{align}
    Note that for any sets $\mathcal{D}$, $\mathcal{E}$, $\mathcal{F}$ that
    \begin{equation}
        (\mathcal{D}=\mathcal{E} \cap \mathcal{F}) \Rightarrow (\mathcal{D} \subseteq\mathcal{E}).\label{eqn:rob_set_int_inc2}
    \end{equation}
    Applying \eqref{eqn:rob_set_int_inc2} to \eqref{eqn:robustmcall} completes the proof: 
    \begin{equation*}
        Pre^{M+1}(\mathcal{S}, \mathcal{W})\subseteq Pre^M(\mathcal{S},\mathcal{W})
    \end{equation*}
\endproof 

\begin{theorem}\label{thm:rob_set_inclusion}
    Given a discrete model $f_{T_s}$, if Alg.~\ref{alg:rob_mcinf} converges in finite time for given $M$ and $M+1$, then the robust maximal $(M+1)$-step hold control invariant set is a subset of the robust maximal $M$-step hold control invariant set:
    \begin{align*}
        &\mathcal{RC}^{M+1}_{\infty, T_s}\subseteq \mathcal{RC}^{M}_{\infty, T_s} ~~ \forall M\in\mathbb{N}_+ \label{eqn:rob_set_inc} 
    \end{align*}
\end{theorem}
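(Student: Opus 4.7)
The plan is to mirror the structure of the proof of Thm.~\ref{thm:mset_inclusion}, substituting Lem.~\ref{lem:robustmpre_inclusion} for Lem.~\ref{lem:mpre_inclusion} and $Pre^M(\cdot,\mathcal{W})$ for $Pre^M(\cdot)$ throughout. I would run Alg.~\ref{alg:rob_mcinf} in parallel for $M$ and $M{+}1$, producing two sequences $\{\Omega_i^{M+1}\}$ and $\{\Omega_i^M\}$, and prove by induction on the iteration index $i$ that $\Omega_i^{M+1}\subseteq\Omega_i^M$. The finite-time convergence assumed in the hypothesis then guarantees that both sequences terminate at $\mathcal{RC}^{M+1}_{\infty,T_s}$ and $\mathcal{RC}^{M}_{\infty,T_s}$ respectively, so the per-iteration inclusion lifts to the desired inclusion of the maximal sets.

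The base case is immediate: at $i=0$ both algorithms initialize $\Omega_0^{M+1}=\Omega_0^{M}=\mathcal{X}$. For the inductive step, assume $\Omega_i^{M+1}\subseteq\Omega_i^{M}$. First I would invoke Lem.~\ref{lem:robustmpre_inclusion} on the target set $\Omega_i^{M+1}$ to obtain $Pre^{M+1}(\Omega_i^{M+1},\mathcal{W})\subseteq Pre^{M}(\Omega_i^{M+1},\mathcal{W})$. Next I would apply monotonicity of $Pre^{M}(\cdot,\mathcal{W})$ in its set argument, which is clear from Def.~\ref{def:robmpre} since enlarging $\mathcal{S}$ only loosens the membership constraint $x[k{+}1]\in\mathcal{S}$ that the held input must satisfy for every admissible $w[k]$, to upgrade the inclusion to $Pre^{M+1}(\Omega_i^{M+1},\mathcal{W})\subseteq Pre^{M}(\Omega_i^{M},\mathcal{W})$. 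Intersecting this with the inductive hypothesis $\Omega_i^{M+1}\subseteq\Omega_i^{M}$ then yields $\Omega_{i+1}^{M+1}\subseteq\Omega_{i+1}^{M}$, completing the induction and, upon taking the limit permitted by the finite-time convergence assumption, the theorem.

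I expect no serious obstacle. The only subtlety introduced by the robust setting is that at step $k$ the admissible disturbance is drawn from $\mathcal{W}[\mathrm{mod}(k,M)]$ in one precursor and from $\mathcal{W}[\mathrm{mod}(k,M+1)]$ in the other, but Lem.~\ref{lem:robustmpre_inclusion} already absorbs this subtlety by exploiting that $\mathrm{mod}(k,M)=\mathrm{mod}(k,M+1)=k$ for $k<M$, so the first $M$ disturbance bounds coincide in both cases and the extra $(M{+}1)$-th requirement in $Pre^{M+1}(\mathcal{S},\mathcal{W})$ enters purely as an additional intersected constraint. Consequently, all the real work has effectively been done in Lem.~\ref{lem:robustmpre_inclusion}, and the theorem itself will read as a clean, nearly verbatim, robust analogue of Thm.~\ref{thm:mset_inclusion}.
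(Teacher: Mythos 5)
Your proposal is correct and follows essentially the same route as the paper's proof: induction on the iteration index of Alg.~\ref{alg:rob_mcinf}, with Lem.~\ref{lem:robustmpre_inclusion} plus monotonicity of $Pre^{M}(\cdot,\mathcal{W})$ (which the paper encodes via the decomposition $\Omega^M_i=\Omega^{M+1}_i\cup(\Omega^M_i\setminus\Omega^{M+1}_i)$) giving the inductive step, and finite-time convergence transferring the per-iteration inclusion to the maximal sets. Your observation that the disturbance bounds coincide over the first $M$ steps is exactly the subtlety the paper's lemma handles, so no gap remains.
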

\begin{proof}
We show this using induction.
Initialize two instances of Alg.~\ref{alg:rob_mcinf} for $f_{T_s}$, computing  $\mathcal{RC}^{M+1}_{\infty,T_s}$ and  $\mathcal{RC}^{M}_{\infty,T_s}$. 
    
Consider $i=0$, where $\Omega^{M+1}_0 = \Omega^M_0 = \mathcal{X}$.
By Lem.~\ref{lem:robustmpre_inclusion}:
    \begin{equation*}
        Pre^{M+1}(\Omega^{M+1}_0, \mathcal{W})\subseteq Pre^M(\Omega^M_0, \mathcal{W}),\label{eqn:alg1_init}
    \end{equation*}
    from which it follows that
    \begin{align*}
        \big(Pre^{M+1}(\Omega^{M+1}_0, \mathcal{W})\cap \mathcal{X}\big) &\subseteq \big(Pre^M(\Omega^M_0, \mathcal{W})\cap \mathcal{X}\big)
    \end{align*}
    and therefore $\Omega^{M+1}_1 \subseteq\Omega ^M_1.$

Now consider iteration $i$ of Alg.~\ref{alg:rob_mcinf}, and assume $\Omega^{M+1}_i \subseteq \Omega^M_i$.
Invoking Lem.~\ref{lem:robustmpre_inclusion} on the set $\Omega^{M+1}_i$ gives:
        \begin{align}
        &Pre^{M+1}(\Omega^{M+1}_i,\mathcal{W})\subseteq \nonumber\\
        &~~~~~~~~~ \subseteq Pre^M(\Omega^{M+1}_i,\mathcal{W}) \nonumber\\ 
        &~~~~~~~~~ \subseteq Pre^M\big(\Omega_i^{M+1} \cup (\Omega_i^M \setminus \Omega_i^{M+1} ), \mathcal{W} \big)  \nonumber\\
        & ~~~~~~~~~= Pre^M(\Omega_i^M, \mathcal{W})\label{eqn:use_lem_2}, 
    \end{align}
where \eqref{eqn:use_lem_2} follows from $\Omega^M_i=\Omega^{M+1}_i\cup (\Omega^M_i\setminus\Omega^{M+1}_i).$


    Thus we have shown for Alg.~\ref{alg:rob_mcinf} that \textit{i)} at $i=0$, 
    $\Omega^{M+1}_{1} \subseteq \Omega^{M}_1$, and 
  \textit{ii)} if $\Omega^{M+1}_{i} \subseteq \Omega^{M}_i$, then $\Omega^{M+1}_{i+1} \subseteq \Omega^{M}_{i+1}$. 
    We conclude by induction that $\Omega^{M+1}_\infty \subseteq \Omega^M_\infty$, and therefore it follows by design of Alg.~\ref{alg:rob_mcinf} that
    \[ \text{ }\mathcal{RC}^{M+1}_{\infty,T_s}\subseteq \text{ }\mathcal{RC}^{M}_{\infty, T_s}.\]
\end{proof}

If at time $t$ the ``real" system state \eqref{eqn:real_dyn} is in $\mathcal{RC}^{M}_{\infty, Ts}$, we can use the discretized model \eqref{eqn:rob_dt_dyn} and a control frequency $MT_s$ to control the system in $\mathcal{RC}^{M}_{\infty, Ts}$ for all time steps with an appropriate controller. This is stated in Thm.~\ref{thm:control}. 

\begin{assumption}\label{assm:noise}
    The error \eqref{eqn:errors} between the real, continuous time system $f_r$ \eqref{eqn:real_dyn} 
    and a discrete-time system approximation $f_{d,T_s}$ \eqref{eqn:dt_dyn} 
    of a nominal model $f_c$ 
    \eqref{eqn:ct_dyn} 
    are bounded by a known, time-varying set $\mathcal{W}(\cdot)$, for all admissible inputs $u(t)\in\mathcal{U}$ so that
    \begin{align*}
        &~~~~~~x_r(\tilde{t}) \in x[k] \oplus \mathcal{W}\big[\text{mod}(k,M)\big]\\
        & \tilde{t} \in \big[kT_s, (k+1)T_s\big),~ \forall k \geq 0,~\forall u[k]\in\mathcal{U}
    \end{align*}
    where $\text{mod}(k,M)$ counts the time steps since the last state measurement. 
\end{assumption}
We refer to \cite{STEIN20112626, tomlin2017, hur2019, VOELKER2013943, roy2010} for examples of calculating uncertainty sets $\mathcal{W}$. If the estimated sets $\mathcal{W}$ are very conservative, the robust $M$-step hold invariant set \eqref{def:rob_ctr_inv_dt} may be small or even empty. 

\begin{theorem}\label{thm:control}
    Consider an unknown  real, continuous time system \eqref{eqn:real_dyn} and a corresponding discrete-time system $f_{d,T_s}$ and disturbance set $\mathcal{W}$ such that Assm.~\ref{assm:noise} holds. 
    Assume \eqref{eqn:rob_dt_dyn} and $\mathcal{W}$ are used to construct a Robust $\mathcal{C}^{M}_{\infty, Ts}$ \eqref{eqn:rob_ctr_inv_dt}.
    If $x_r(0) \in \mathcal{RC}^{M}_{\infty, Ts}$ and $ x[0]=x_r(0) $, then
    there exists an $M$-step hold feedback controller $\pi(\cdot)$
    such that 
    the real system \eqref{eqn:real_dyn} in closed-loop with $\pi(\cdot)$ with update frequency $MT_s$ will guarantee that $$x_r(t) \in \mathcal{RC}^{M}_{\infty, Ts}~ \forall t \geq 0.$$ 
\end{theorem}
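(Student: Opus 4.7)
The plan is to build the $M$-step hold feedback law $\pi(\cdot)$ directly from Definition~\ref{def:rob_ctr_inv_dt}, then use induction across the measurement instants $t_j = jMT_s$ together with Assumption~\ref{assm:noise} to lift discrete-time robust invariance of the uncertain model~\eqref{eqn:rob_dt_dyn} to continuous-time safety of the real system~\eqref{eqn:real_dyn}.

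First, I would construct $\pi$ constructively. At each update instant $t_j$, take the fresh measurement $x_r(t_j)$ as the discrete state $x[jM]$. Whenever $x[jM]\in\mathcal{RC}^{M}_{\infty,T_s}$, Definition~\ref{def:rob_ctr_inv_dt} supplies an input $u^*\in\mathcal{U}$, to be held for the next $M$ time steps, such that for every admissible disturbance sequence $w[jM+i]\in\mathcal{W}[i]$, $i=0,\dots,M-1$, the discrete trajectory $x[jM+i+1]=f^*_{d,T_s}(x[jM+i],u^*,w[jM+i])$ remains in $\mathcal{RC}^{M}_{\infty,T_s}$. Setting $\pi$ to output $u^*$ over $[t_j,t_{j+1})$ yields a well-defined $M$-step hold feedback law compatible with the update frequency $MT_s$ required by the statement.

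Next, I would proceed by induction on $j$. The base case $j=0$ is immediate from $x_r(0)=x[0]\in\mathcal{RC}^{M}_{\infty,T_s}$. For the inductive step, assume $x_r(t_j)\in\mathcal{RC}^{M}_{\infty,T_s}$. Because a new measurement is taken at $t_j$, $\text{mod}(jM,M)=0$ resets the disturbance counter, and Assumption~\ref{assm:noise} gives $x_r(\tilde t)\in x[k]\oplus\mathcal{W}[\text{mod}(k,M)]$ throughout $[t_j,t_{j+1})$ for the corresponding $k\in\{jM,\dots,(j{+}1)M-1\}$. In particular, $x_r(t_{j+1})$ coincides with some $x[(j+1)M]$ produced by a valid realization of the disturbance sequence in the uncertain discrete model, and robust invariance then delivers $x_r(t_{j+1})\in\mathcal{RC}^{M}_{\infty,T_s}$, closing the induction at measurement instants.

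The main obstacle is covering the inter-sample times, since Definition~\ref{def:rob_ctr_inv_dt} only enforces membership at integer multiples of $T_s$, whereas the theorem claims $x_r(t)\in\mathcal{RC}^{M}_{\infty,T_s}$ for every continuous $t\geq 0$. Bridging this gap requires reading Assumption~\ref{assm:noise} as a uniform bound on $x_r(\tilde t)$ over each sub-interval $[kT_s,(k+1)T_s)$ by $\mathcal{W}[\text{mod}(k,M)]$, and noting that the robust precursor of Definition~\ref{def:robmpre}, by construction, keeps the entire worst-case tube inside $\mathcal{RC}^{M}_{\infty,T_s}$. Once this reduction is made explicit, the conclusion follows by an induction essentially identical in spirit to the one used in the proof of Theorem~\ref{thm:rob_set_inclusion}.
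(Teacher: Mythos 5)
Your proposal takes essentially the same route as the paper's proof: build the $M$-step hold feedback law directly from Definition~\ref{def:rob_ctr_inv_dt}, which keeps every disturbed discrete trajectory in $\mathcal{RC}^{M}_{\infty,T_s}$, and then invoke Assumption~\ref{assm:noise} to conclude that the real continuous-time trajectory, lying in the $\mathcal{W}$-tube used to construct the set, satisfies $x_r(t)\in\mathcal{RC}^{M}_{\infty,T_s}$ for all $t\geq 0$. The paper compresses your induction over measurement instants and your inter-sample discussion into two sentences, so your version is simply a more explicit rendering of the same argument.
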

\begin{proof}
Consider the system \eqref{eqn:rob_dt_dyn}. By definition of $\mathcal{RC}^{M}_{\infty, Ts}$ \eqref{eqn:rob_ctr_inv_dt}, if $x[k] \in \mathcal{RC}^{M}_{\infty, Ts}$ then there exists an $M$-step hold input $\pi(x_k) = u[k] \in \mathcal{U}$ such that $x[k+1] \in \mathcal{RC}^{M}_{\infty, Ts}$. 
By Assm.~\ref{assm:noise}, the underlying real-world system \eqref{eqn:real_dyn} trajectory is captured within the noise bound $\mathcal{W}$ used to construct $\mathcal{RC}^{M}_{\infty, Ts}$, and so $x_r(t) \in \mathcal{RC}^{M}_{\infty, Ts}~ \forall t \geq 0.$ 
\end{proof}



We consider
the earlier example of a constrained system discretized with $T_s=0.5$ \eqref{eqn:dbl_int_0.5}, now with additive uncertainty:
\begin{equation*}
    x[k+1]=f_{0.5}(\cdot, \cdot, \cdot)=\begin{bmatrix} 1 & 0.5 \\ 0 & 1 \end{bmatrix}x[k]+\begin{bmatrix} 0.125 \\ 0.5 \end{bmatrix}u[k] + w[k]
\end{equation*}
\begin{equation*}
    \big|w[k]\big|\leq \big(\text{mod}(k,M)+1\big)
    \begin{bmatrix} 
        0.2 \\ 0.2
    \end{bmatrix}\label{eqn:wset_lti}
\end{equation*}
 $\set{RC}^M_{\infty,T_s}$ sets for $M\in\{1,4,6,8\}$ are shown in Fig.~\ref{fig:rob_mcinf}; as guaranteed by Thm.~\ref{thm:rob_set_inclusion}, $\set{RC}^M_{\infty,T_s}$ sets for larger $M$ are subset of those for smaller $M$.
Thus an $M$-step hold controller can switch between $M_2$ and $M_1$ ($M_2>M_1$) at any state within  $\set{RC}^{M_2}_{\infty,T_s}$, with recursive feasibility guaranteed by Thm.~\ref{thm:control}.
\begin{figure}
    \centering
    \includegraphics[width=1\linewidth]{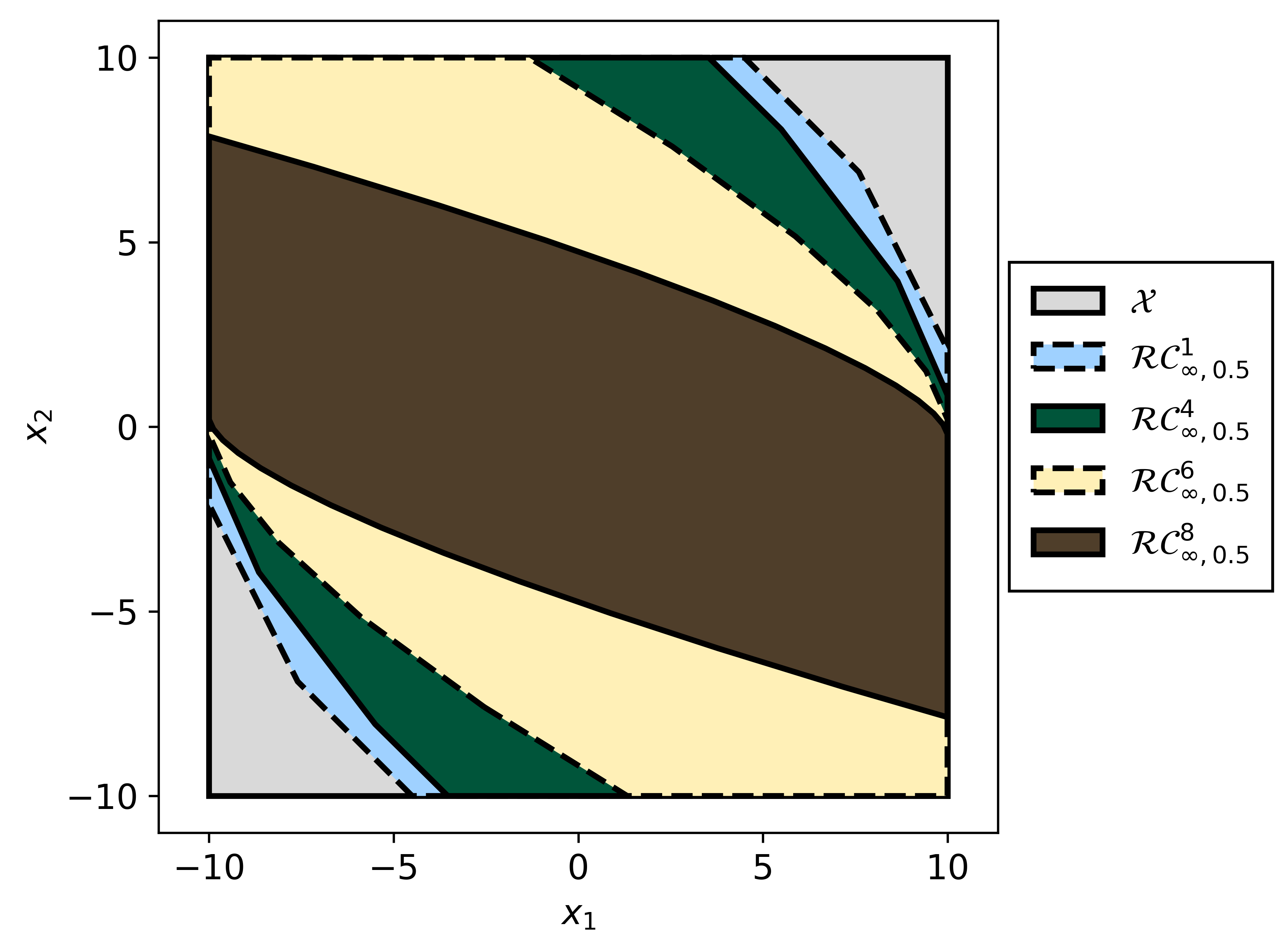}
    \caption{
     $\set{RC}^M_{\infty,T_s}$ with larger $M$ are subsets of those with smaller $M$.}
    \label{fig:rob_mcinf}
\end{figure}

Robust $M$-step hold control invariance paves the way for safety-critical adaptive-sampling control. Future work will design optimal $M$-step hold controllers with adaptive $M$, 
guaranteeing recursive feasibility through $\set{RC}^M_{\infty,T_s}$ sets. 

\section{Computation for Discrete LTI Models}\label{sec:computation} 
The definitions and algorithms of (robust) $M$-step hold control invariance make no assumptions of linearity. Many methods exist for calculating controllable sets for nonlinear models \cite{tomlin2018, Xiang2019, johnson2018, althoff2024}. For discrete LTI models, $\set{C}^M_{\infty,T_s}$ and $\set{RC}^M_{\infty,T_s}$ are easy to compute because $Pre^M(\set{S})$ and $Pre^M(\set{S},\set{W})$ become simple polytope projections. 

Consider a discretized model with polytopic constraints:
\begin{subequations}
\begin{gather}
    x[k+1]=A^*_{T_s}x[k]+B^*_{T_s}u[k] \label{eqn:dt_lti}\\
    \mathcal{S}=\{x : Hx \leq h\},~
    \mathcal{U} = \{u:H_u u\leq h_u\} \label{eqn:h_rep}
\end{gather}
\end{subequations}
Substituting \eqref{eqn:dt_lti},\eqref{eqn:h_rep} into Def.~\ref{def:mpre} generates:
\begin{align}
    Pre^M(\mathcal{S}) = 
    \left\{x\in\mathbb{R}^n: \exists ~u\in\mathbb{R}^m ~ s.t. ~\hat{H} 
    \begin{pmatrix} x \\ u \end{pmatrix} \leq \hat{h}\right\} \nonumber\\
    \hat{H} = \begin{bmatrix} 
    HA_{T_s} & HB_{T_s} \\ 
    HA^2_{T_s} & H(A_{T_s}B_{T_s}+B_{T_s})  \\
    \vdots & \vdots \\
    HA_{T_s}^M & H\left(\sum\limits_{n=1}^{M}{A_{T_s}^{M-n}B_{T_s}}\right)\\
    0 & H_u
    \end{bmatrix} \quad
     \hat{h}=\begin{bmatrix} h \\ h\\ \vdots \\ h \\ h_u \end{bmatrix} \label{eqn:mpre_lti_b}
\end{align}
Using \eqref{eqn:mpre_lti_b} in Alg. \ref{alg:mcinf} is sufficient to calculate $\set{C}^M_{\infty,T_s}$. 

Similarly, to compute $\set{RC}^M_{\infty,T_s}$, we first augment the nominal model with  additive uncertainty:
\begin{subequations}
\begin{gather}
    x[k+1]=A_{T_s}x[k]+B_{T_s}u[k]+w[k] \label{eqn:rob_dt_lti}\\
    \mathcal{S}=\{x : Hx \leq h\},~
    \mathcal{U} = \{u:H_u u\leq h_u\}\label{eqn:rob_h_rep}\\
    w[k]\in\set{W}\big[\text{mod}(k,M)\big] \nonumber
\end{gather}
\end{subequations}
Substituting \eqref{eqn:rob_dt_lti},\eqref{eqn:rob_h_rep} into Def.~\ref{def:robmpre} generates:
\begin{align}
    Pre^M(\mathcal{S},\mathcal{W}) = 
    \left\{x\in\mathbb{R}^n: \exists ~u\in\mathbb{R}^m ~ s.t. ~\hat{H} 
    \begin{pmatrix} x \\ u \end{pmatrix} \leq \hat{h}\right\} \nonumber\\
    \hat{H} = \begin{bmatrix} 
    HA_{T_s} & HB_{T_s} \\ 
    HA^2_{T_s} & H(A_{T_s}B_{T_s}+B_{T_s})  \\
    \vdots & \vdots \\
    HA_{T_s}^M & H\left(\sum\limits_{n=1}^{M}{A_{T_s}^{M-n}B_{T_s}}\right)\\
    0 & H_u
    \end{bmatrix} \quad
    \hat{h}= \begin{bmatrix} \tilde{h}[0] \\ \tilde{h}[1] \\\vdots \\ \tilde{h}[M-1] \\ h_u \end{bmatrix} \label{eqn:rob_mpre_lti_b}
\end{align}
where
\begin{equation*}
    \tilde{h}_j[k]=\min_{w\in\set{W}\big[\text{mod}(k,M)\big]}(h_j-H_jw), ~\forall k\in\{0,\dots,M-1\}. \label{eqn:rob_lp}
\end{equation*}

Using \eqref{eqn:rob_mpre_lti_b} in Alg.~\ref{alg:rob_mcinf} is sufficient to calculate $\set{RC}^M_{\infty,T_s}$. 

\section{Nonlinear Example}
Consider the continuous-time, nonlinear system
\begin{equation}
    \dot{x}_r(t)=\sin\big(x_r(t)\big)+u(t). \label{eqn:nl_ex_fr}
\end{equation}
We model~\eqref{eqn:nl_ex_fr} using its linearization about $x_r=0$, $u_r=0$,
\begin{equation}
    \dot{x}_c=x_c(t)+u(t). \label{eqn:nl_ex_fc}
\end{equation}
The exact discretization of~\eqref{eqn:nl_ex_fc} with sampling time $T_s$ is
\begin{equation}
    x_d[k+1]=e^{T_s}x_d[k]+(e^{T_s}-1)u[k]. \label{eqn:nl_ex_fd}
\end{equation}
We assume the same input with a ZOH is applied to all models. Furthermore, all models are subject to the same state and input constraints:
\begin{equation}
    -1\leq x\leq 1,~-1\leq u\leq1  \label{eqn:nl_ex_cn}
\end{equation}

To find robust $M$-step hold control invariant sets for~\eqref{eqn:nl_ex_fd}, we must bound the error between the real system and the discrete model. First, we define the error between the real system and the continous model $e_c(t)=x_r(t)-x_c(t)$ with
\begin{align}
    \dot{e}_c(t)&=\dot{x}_r(t)-\dot{x}_c(t)\nonumber\\
    &=\sin\big(x_r(t)\big)-x_c(t).\nonumber
\end{align}
We use the Taylor series expansion of $\sin(x)$ and choose to bound the higher order terms as follows:
\begin{align}
    \dot{e}_c(t)&=x_r(t)-\frac{x_r(t)^3}{6}+\mathcal{O}\big(x_r(t)^5\big)-x_c(t) \nonumber\\
    &=e_c(t)-\frac{x_r(t)^3}{6}+\mathcal{O}\big(x_r(t)^5\big)\nonumber\\
    \big|\dot{e}_c(t)\big|&\leq e_c(t)+\frac{\big|x_r(t)\big|^3}{6}\nonumber\\
    &\leq e_c(t)+\frac{1}{6} \label{eqn:nl_ex_edot}
\end{align}
where $|x_r(t)|\leq1$ due to the state constraints~\eqref{eqn:nl_ex_cn}. We solve~\eqref{eqn:nl_ex_edot} to bound $e_c(t)$, assuming $e_c(0)=0$:
\begin{align}
    \big|e_c(t)\big|&\leq e_c(0)\exp(t)+\frac{1}{6}\big(\exp(t)-1\big) \nonumber\\
    &=\frac{1}{6}\big(\exp(t)-1\big) \label{eqn:nl_ex_ebound}
\end{align}
Since~\eqref{eqn:nl_ex_fd} was derived using exact discretization and~\eqref{eqn:nl_ex_ebound} is monotonically increasing, it is sufficient to query the bound on $e_c(t)$ every $kT_s$ to find the bound on the error between the real system and the discrete model $e_d[k]=x_r(kT_s)-x_d[t]$:
\begin{equation}
    \big|e_d[k]\big|\leq\frac{1}{6}\big(\exp(kT_s)-1\big) 
\end{equation}
This error bound is used as the bound on $w[k]$ in~\eqref{eqn:rob_dt_lti} to calculate $\set{RC}^M_{\infty,T_s}$ sets for~\eqref{eqn:nl_ex_fd}
\begin{equation}
    \big|w[k]\big|\leq e_d\big[\text{mod}(k,M)+1\big].
\end{equation}
These sets are shown for $T_s=0.1$ and $M=\{1, 5, 10\}$ in Figure~\ref{fig:nl_example}. 

\begin{figure}
    \centering
    \includegraphics[width=1\linewidth]{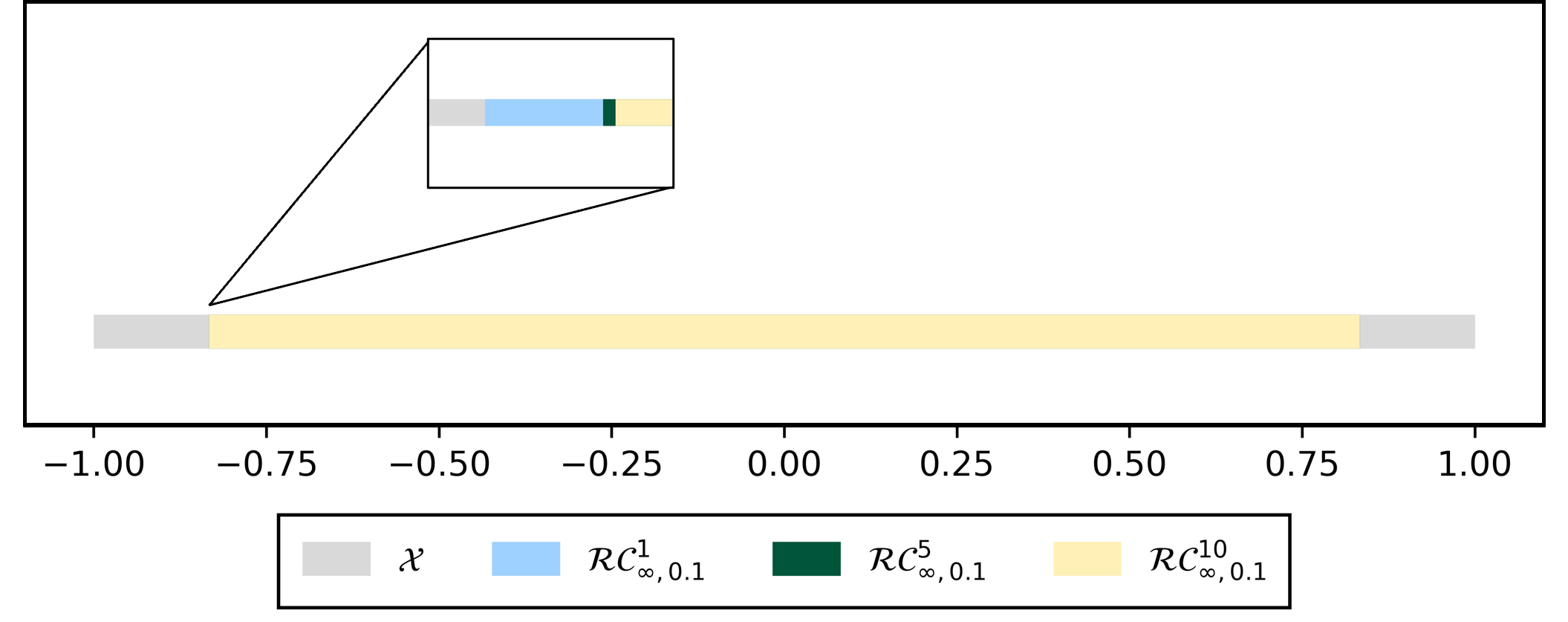}
    \caption{$\set{RC}^M_{\infty,T_s}$ for the nonlinear example.}
    \label{fig:nl_example}
\end{figure}

\section{Conclusion}\label{sec:conc}
We introduced $M$-step hold invariance 
for discrete-time systems, and demonstrated how it enables reasoning about constraint satisfaction for variable control sampling rates. We provided a framework for calculating $M$-step hold invariant sets and robustifying against modeling and discretization errors. Future work will leverage these sets to develop safe adaptive sampling controllers.


\bibliographystyle{ieeetr}
\bibliography{root}

\begin{thebibliography}{10}

\bibitem{Blanchini_Set_Inv}
F.~Blanchini, ``Set invariance in control,'' {\em Automatica}, vol.~35, no.~11, pp.~1747--1767, 1999.

\bibitem{Borrelli_MPC_book}
F.~Borrelli, A.~Bemporad, and M.~Morari, {\em Predictive Control for Linear and Hybrid Systems}.
\newblock Cambridge University Press, 2017.

\bibitem{Decardi_Rob_MPC}
B.~Decardi-Nelson and J.~Liu, ``Robust economic model predictive control with zone tracking,'' {\em Chemical Engineering Research and Design}, vol.~177, pp.~502--512, 2022.

\bibitem{Dorf_PID_Vary_Ts}
R.~Dorf, M.~Farren, and C.~Phillips, ``Adaptive sampling frequency for sampled-data control systems,'' {\em IRE Transactions on Automatic Control}, vol.~7, no.~1, pp.~38--47, 1962.

\bibitem{Henriksson_LQR_Vary_Ts}
D.~Henriksson and A.~Cervin, ``Optimal on-line sampling period assignment for real-time control tasks based on plant state information,'' in {\em Proceedings of the 44th IEEE CDC}, pp.~4469--4474, IEEE, 2005.

\bibitem{Xue_MPC_Vary_Ts}
W.~Xue and L.~Zheng, ``Active collision avoidance system design based on model predictive control with varying sampling time,'' {\em Automotive innovation}, vol.~3, no.~1, pp.~62--72, 2020.

\bibitem{Gomozov_MPC_Vary_Ts}
O.~Gomozov, J.~P.~F. Trov{\~a}o, X.~Kestelyn, and M.~R. Dubois, ``Adaptive energy management system based on a real-time model predictive control with nonuniform sampling time for multiple energy storage electric vehicle,'' {\em IEEE Transactions on Vehicular Technology}, vol.~66, no.~7, pp.~5520--5530, 2016.

\bibitem{Hu_Lyapunov_Vary_Ts}
B.~Hu and A.~N. Michel, ``Stability analysis of digital feedback control systems with time-varying sampling periods,'' {\em Automatica}, vol.~36, no.~6, pp.~897--905, 2000.

\bibitem{Shen_recurrence}
Y.~Shen, M.~Bichuch, and E.~Mallada, ``Model-free learning of regions of attraction via recurrent sets,'' in {\em 2022 IEEE 61st Conference on Decision and Control (CDC)}, pp.~4714--4719, 2022.

\bibitem{Olaru_p-invariance}
S.~Olaru, M.~Soyer, Z.~Zhao, C.~E.~T. Dórea, E.~Kofman, and A.~Girard, ``From relaxed constraint satisfaction to p-invariance of sets,'' {\em IEEE Transactions on Automatic Control}, vol.~69, no.~10, pp.~7036--7042, 2024.

\bibitem{Elango_InterSample1}
P.~Elango, D.~Luo, A.~G. Kamath, S.~Uzun, T.~Kim, and B.~A{\c{c}}{\i}kme{\c{s}}e, ``Successive convexification for trajectory optimization with continuous-time constraint satisfaction,'' {\em arXiv preprint arXiv:2404.16826}, 2024.

\bibitem{Uzun_Intersample4}
S.~Uzun, B.~Acikmese, and J.~M. Carson, ``Sequential convex programming for 6-dof powered descent guidance with continuous-time compound state-triggered constraints,'' {\em AIAA SCITECH 2025}, 2025.

\bibitem{STEIN20112626}
E.~Stein, M.~Rüter, and S.~Ohnimus, ``Implicit upper bound error estimates for combined expansive model and discretization adaptivity,'' {\em Computer Methods in Applied Mechanics and Engineering}, vol.~200, no.~37, pp.~2626--2638, 2011.

\bibitem{tomlin2017}
S.~L. Herbert, M.~Chen, S.~Han, S.~Bansal, J.~F. Fisac, and C.~J. Tomlin, ``Fastrack: A modular framework for fast and guaranteed safe motion planning,'' in {\em 2017 IEEE 56th Annual Conference on Decision and Control (CDC)}, pp.~1517--1522, 2017.

\bibitem{hur2019}
J.-K. Kim, J.~Ma, K.~Sun, J.~Lee, J.~Shin, Y.~Kim, and K.~Hur, ``A computationally efficient method for bounding impacts of multiple uncertain parameters in dynamic load models,'' {\em IEEE Transactions on Power Systems}, vol.~34, no.~2, pp.~897--907, 2019.

\bibitem{VOELKER2013943}
A.~Voelker, K.~Kouramas, and E.~N. Pistikopoulos, ``Moving horizon estimation: Error dynamics and bounding error sets for robust control,'' {\em Automatica}, vol.~49, no.~4, pp.~943--948, 2013.

\bibitem{roy2010}
C.~Roy, ``Review of discretization error estimators in scientific computing,'' {\em 48th AIAA Aerospace Sciences Meeting Including the New Horizons Forum and Aerospace Exposition}, 01 2010.

\bibitem{tomlin2018}
M.~Chen, S.~L. Herbert, M.~S. Vashishtha, S.~Bansal, and C.~J. Tomlin, ``Decomposition of reachable sets and tubes for a class of nonlinear systems,'' {\em IEEE Transactions on Automatic Control}, vol.~63, no.~11, pp.~3675--3688, 2018.

\bibitem{Xiang2019}
W.~Xiang, D.~M. Lopez, P.~Musau, and T.~T. Johnson, {\em Reachable Set Estimation and Verification for Neural Network Models of Nonlinear Dynamic Systems}, pp.~123--144.
\newblock Cham: Springer International Publishing, 2019.

\bibitem{johnson2018}
W.~Xiang, D.~M. Lopez, P.~Musau, and T.~T. Johnson, ``Reachable set estimation and verification for neural network models of nonlinear dynamic systems,'' {\em CoRR}, vol.~abs/1802.03557, 2018.

\bibitem{althoff2024}
L.~Schäfer, F.~Gruber, and M.~Althoff, ``Scalable computation of robust control invariant sets of nonlinear systems,'' {\em IEEE Transactions on Automatic Control}, vol.~69, no.~2, pp.~755--770, 2024.

\end{thebibliography}

\end{document}